\newcommand{\polylog}{{\mathrm{polylog}}}
\DeclareMathSymbol{\qedsymb} {\mathord}{AMSa}{"04}
\newcommand{\eps}{\varepsilon}
\renewcommand{\epsilon}{\varepsilon}
\newcommand{\ceil}[1]{\left\lceil #1 \right\rceil}
\newcommand{\floor}[1]{\left\lfloor #1 \right\rfloor}
\newcommand{\oct}{\quad\quad}                                   
\newcommand{\norm}[1]{\left\lVert #1 \right\rVert}
\newcommand{\TLE}{\mathrm{TLE}}
\newcommand{\GF}{\mathrm{GF}}
\newcommand{\dlog}{\mathrm{dlog}}
\newcommand{\E}{\mathbf{E}}
\newcommand{\Var}{\mathbf{Var}}
\renewcommand{\Pr}{\mathbf{Pr}}
\renewcommand{\mod}{\hbox{ mod }}
\newcommand{\EquationName}[1]{\label{eq:#1}}
\newcommand{\LemmaName}[1]{\label{lem:#1}}
\newcommand{\RemarkName}[1]{\label{rem:#1}}
\newcommand{\SectionName}[1]{\label{sec:#1}}
\newcommand{\TheoremName}[1]{\label{thm:#1}}
\newcommand{\FigureName}[1]{\label{fig:#1}}
\newcommand{\Equation}[1]{Eq.\:\eqref{eq:#1}}
\newcommand{\Lemma}[1]{Lemma~\ref{lem:#1}}
\newcommand{\Remark}[1]{Remark~\ref{rem:#1}}
\newcommand{\Section}[1]{Section~\ref{sec:#1}}
\newcommand{\Theorem}[1]{Theorem~\ref{thm:#1}}
\newcommand{\Figure}[1]{Figure~\ref{fig:#1}}
\newtheorem{theorem}{Theorem}
\newtheorem{lemma}[theorem]{Lemma}
\newtheorem{remark}[theorem]{Remark}
\newcommand{\proofbelow}{3pt}
\newcommand{\afterproof}{\hfill $\blacksquare$ \par \vspace{\proofbelow}}
\newcommand{\aftersubproof}{\hfill $\Box$ \par \vspace{\proofbelow}}
\renewenvironment{proof}{\noindent\textbf{Proof.}\,}{\afterproof}
\newcommand{\poly}{\mathop{{\rm poly}}}
\renewcommand{\th}{\ifmmode{^{\textrm{th}}}\else{\textsuperscript{th}\ }\fi}
\newcommand{\eqdef}{\mathbin{\stackrel{\rm def}{=}}}
\newcommand{\comment}[1]{}
\begin{document}

\author{Jelani Nelson\footnotemark[2]\oct
  David P. Woodruff\footnotemark[3]}

\date{}

\footnotetext[1]{MIT Computer Science and Artificial Intelligence
  Laboratory. \texttt{minilek@mit.edu}. Supported by a National
  Defense Science and Engineering Graduate (NDSEG) Fellowship. Much of this
  work was done while the author was at the IBM Almaden
  Research Center.}
\footnotetext[2]{IBM Almaden Research Center, 650 Harry Road, San
  Jose, CA, USA. \texttt{dpwoodru@us.ibm.com}.}

\title{A Near-Optimal Algorithm for L1-Difference}

\maketitle

\begin{abstract}
\thispagestyle{empty}
\noindent

We give the first $L_1$-sketching algorithm for integer vectors which
produces nearly optimal sized sketches in nearly linear time. This
answers the first open problem in the list of open problems 
from the 2006 IITK Workshop on Algorithms for Data Streams. 
Specifically, suppose Alice
receives a vector $x\in\{-M,\ldots,M\}^n$ and Bob receives
$y\in\{-M,\ldots,M\}^n$,
and the two parties share randomness.  Each party must output a short 
sketch of their vector such that a
third party can later quickly recover a $(1\pm\eps)$-approximation to
$||x-y||_1$ with $2/3$ probability given only the sketches.
We give a sketching algorithm which produces
$O(\eps^{-2}\log(1/\eps)\log(nM))$-bit sketches in $O(n\log^2(nM))$
time, independent of $\eps$. The
previous best known sketching algorithm for $L_1$ is due to
[Feigenbaum {\it et al.}, SICOMP 2002], which achieved the optimal
sketch length of
$O(\eps^{-2}\log(nM))$ bits but had a running time of
$O(n\log(nM)/\eps^2)$. Notice that our
running time is near-linear for every $\eps$, whereas for sufficiently
small values of $\eps$, the running time of the previous algorithm can
be as large as
quadratic. Like their algorithm, our sketching
procedure also yields a small-space, one-pass streaming algorithm
which works even if the
entries of $x,y$ are given in arbitrary order. 
\end{abstract}

\section{Introduction}\SectionName{intro}
Space and time-efficient processing of massive databases is a
challenging and important task in applications such as observational
sciences, product marketing, and monitoring large systems. Usually the
data set is distributed across several network devices, each receiving
a portion of the data as a stream. The devices must locally
process their data, producing a small sketch, which can then be
efficiently transmitted to other devices for further
processing. Although much work has focused on producing sketches of
minimal size for various problems, in practice the time efficiency to
produce the sketches is just as important, if not more so, than the
sketch size. 

In the 2006 IITK Workshop on Algorithms for Data Streams, the first
open question posed \cite{IITK} was to find a space- {\it and}
time-efficient algorithm for $L_1$-difference computation. Formally,
there are two parties, Alice and Bob, who have vectors $x, y \in \{-M,
\ldots, M\}^n$ and wish to compute sketches $s(x)$
and $s(y)$, respectively, so that a third party can quickly recover a
value $Z$ with $(1-\eps)\norm{x-y}_1 \leq Z \leq
(1+\eps)\norm{x-y}_1$. Here, $\norm{x-y}_1 = \sum_{i=1}^n |x-y|_i$
denotes the $L_1$-norm of the vector $x-y$. The third party should
succeed with probability at least $2/3$ over the randomness of Alice
and Bob (this probability can be amplified by
repeating the process and taking the median).

The original motivation \cite{FKSV02} for the $L_1$-difference problem is
Internet-traffic monitoring. As packets travel through
Cisco routers, the NetFlow software \cite{CN98} produces summary
statistics of groups of packets with the same source and destination
IP address. Such a group of packets is known as a {\it flow}. At the
end of a specified time period, a router assembles sets of values $(s,
f_t(s))$, where $s$ is a source-destination pair, and $f_t(s)$ is the
total number of bytes sent from the source to the destination in time
period $t$. The $L_1$-difference between such sets assembled during
different time periods or at different routers indicates differences
in traffic patterns.

The ability to produce a short sketch summarizing the set of values
allows a central control and storage facility to later efficiently
approximate the $L_1$-difference between the sketches that it
receives. The routers producing the sketches cannot
predict which source-destination pairs they will receive, or in which
order. Since the routers can transmit their sketches and updates
to their sketches to the central processing facility in an arbitrarily
interleaved manner, it is essential that the $L_1$-difference
algorithm support arbitrary permutations of the assembled sets of
values. Because of the huge size of the packet streams, it is also
crucial that the computation time required to produce the sketches
be as small as possible.

The first algorithm for this problem is due to Feigenbaum {\it et al.}
\cite{FKSV02}, and achieves a sketch of size $O(\eps^{-2} \log (nM))$
bits with each party requiring $O(n\log (nM)/\eps^2)$ processing
time (in word operations). Later, Indyk \cite{Indyk06} generalized this
to the
problem of estimating the $L_1$-norm of a general data stream with
an arbitrary number of updates to each coordinate. For the
$L_1$-difference problem, the space of Indyk's method is worse than
that of \cite{FKSV02} by a factor of $\log (n)$, while the time
complexity is similar. Recently, in \cite{KNW08} it was shown how to
reduce the space complexity of Indyk's method, thereby matching the
sketch size of \cite{FKSV02}. 
However, the time complexity per stream update 
is $\Omega(\eps^{-2})$. Also in \cite{KNW08}, a space lower
bound of $\Omega(\eps^{-2} \log (nM))$ was shown for the
$L_1$-difference problem for nearly the full range of interesting
values for $\eps$, thereby showing that the space complexity of
the algorithms of \cite{FKSV02, KNW08} are optimal.

While the space complexity for $L_1$-difference is settled, there are
no non-trivial lower bounds for the time complexity. Notice that the
$\eps^{-2}$ factor in the processing time can be a severe drawback in
practice, and can make the difference between setting the
approximation quality to $\eps = .1$ or to $\eps = .01$. Indeed, in
several previous works (see the references in Muthukrishnan's book
\cite{Muthu}, or in Indyk's course notes \cite{IndykCourse}), the main
goal was to reduce the dependence on $\eps$ in the space
and/or time complexity. This raises
the question, posed in the IITK workshop, as to whether this
dependence on $\eps$ can be improved for $L_1$-difference. As a first step, Cormode and
Ganguly \cite{CG07} show that if one increases the sketch size to
$\eps^{-3} \polylog (nM)$, then it is possible to achieve processing
time $n \cdot \polylog (nM)$, thereby removing the dependence on
$\eps$. Their algorithm even works for $L_1$-norm estimation of
general data streams, and not just for $L_1$-difference.
However, for reasonable values of $nM$, the sketch size is
dominated by the $\eps^{-3}$ term, which may be prohibitive in
practice, and is sub-optimal.

In this paper we show how to achieve a near-optimal $O(\eps^{-2}
\log(1/\eps)\log (nM))$ sketch size, while simultaneously achieving
a near linear $O(n \log^2 (nM))$ processing time, independent of $\eps$. Notice our
space is only a
factor of $\log(1/\eps)$ more than the lower bound. The time for a third
party to recover a $(1 \pm \eps)$-approximation to the $L_1$-difference,
given the sketches,
is nearly linear in the sketch size.
Furthermore, our
sketching procedure naturally can be implemented as a one-pass
streaming algorithm over an adversarial ordering of the coordinates of
$x,y$ (this was also true of previous algorithms). 
Thus, up to small
factors, we
resolve the first open question of \cite{IITK}.
We henceforth
describe our sketching procedure as a streaming algorithm.

While in \cite{CG07} and \cite{IITK} it is suggested to use the
techniques of \cite{bgks06} and \cite{IW} for estimating $L_p$ norms,
$p > 2$, which are themselves based on estimating coordinates of heavy
weight
individually and removing them, we do not follow this
approach. Moreover, the approach of \cite{FKSV02} is to embed
$L_1$-difference into $L_2^2$, then use the AMS sketch \cite{AMS99}
and range-summable hash functions they design
to reduce the processing time. We do not follow this approach either.

Instead, our first idea is to embed the $L_1$-difference problem into
$L_0$, the number of non-zero coordinates of the underlying vector 
(in this case $x-y$) presented as data stream. Such an
embedding has been used before, for example, in lower bounding the
space complexity of estimating $L_0$ in a data stream
\cite{IW03}.  Suppose for simplicity $x_i,y_i\ge 0$ for all $i\in
[n]$. Here the idea is for Alice to treat her input $x_i$ as a
set of distinct items $M(i-1)+1, \ldots, M(i-1) + x_i$, while Bob
treats his input $y_i$ as a set of distinct items $M(i-1) + 1,
\ldots, M(i-1) + y_i$. Then the size of the set-difference of these
two sets is $|x_i-y_i|$. Thus, if Alice inserts all of the set
elements corresponding to her coordinates as insertions into an
$L_0$-algorithm, while Bob inserts all of his elements as deletions,
the $L_0$-value
in the resulting stream equals $\norm{x-y}_1$. A recent
space-efficient algorithm for estimating $L_0$ with deletions
is given in \cite{KNW08}.

The problem with directly reducing to $L_0$ is that, while the
resulting space complexity is small, the processing time can be as
large as $O(nM)$ since we
must insert each set element into the $L_0$-algorithm. We overcome
this by developing a range-efficient $L_0$ algorithm, i.e. an
algorithm which allows updates to ranges at a time,
which works for streams coming out of our reduction by exploiting the
structure of ranges we update (all updated ranges
are of length at most $M$ and start at an index of the form $M(i-1)+1$).
We note that range-efficient $L_0$ algorithms have been
developed before \cite{BKS02,PavanTir07}, but those algorithms do
not allow deletions and thus do not suffice for our purposes.

At a high level, our algorithm works by sub-sampling by powers of $2$
the universe $[nM]$ arising out of our reduction to $L_0$. At each
level we keep a data structure of size
$O(\eps^{-2}\log(1/\eps))$ to summarize the items that are
sub-sampled at that level. We also maintain a
data structure on the side to handle the case when $L_0$ is
small, and we in parallel obtain a constant-factor approximation $R$
of the $L_1$-difference using \cite{FKSV02}. At the stream's end, we
give our estimate of the $L_1$-difference based on the summary data
structure living at the level where the expected number of universe
elements sub-sampled is $\Theta(1/\eps^2)$ (we can
determine this level knowing $R$).
As is the case in many previous streaming algorithms, the sub-sampling
of the stream can be
implemented using pairwise-independent hash functions. This allows us
to use a subroutine developed by Pavan and Tirthapura
\cite{PavanTir07} for quickly counting the number of universe elements
that
are sub-sampled at each of the $\log(nM)$ levels. Given these counts,
our summary data structures are such that we can update each one
efficiently.

Our summary data structure at a given level maintains $(x'-y')H$,
where $H$ is the parity-check matrix of
a linear error-correcting code, and $x',y'$ are the vectors derived
from $x,y$ by sub-sampling at that level. When promised that
$x',y'$ differ on few coordinates, we can treat $x'-y'$ as
a corruption of the encoding of the $0$ codeword then attempt to
decode to recover the ``error'' $x'-y'$. The decoding succeeds as long
as the minimum distance of the code is sufficiently high. This idea of
using error-correcting codes to
sketch vectors whose distance is promised to be small is known in the
error-correcting codes literature as {\em
  syndrome decoding} \cite{Wyner74}. Aside from error-correction,
syndrome decoding has also
found uses in cryptography: in the work of
\cite{BBR88,Smith07} to give a two-party protocol
for agreeing on a shared secret key when communicating over a noisy
channel, and in the work of \cite{FIMNSW06} as part of a private
two-party communication protocol for computing the Hamming distance
between two bitstrings.

For efficiency reasons, our implementation of the summary data
structure is mostly inspired by work of Feigenbaum {\it et al.}
\cite{FIMNSW06}. Given that
$x',y'$ differ on at most $k$ coordinates, they use the parity-check
matrix of a Reed-Solomon code of minimum distance $O(k)$.  Decoding
can then be done in time $O(k^2 + k\cdot \polylog(k)\log(n))$ using
an algorithm of Dodis {\it et al.} \cite{DORS08}. In our
application $k =
\Theta(\eps^{-2})$, and thus this recovery procedure is too slow for our
purposes.  To remedy this, we first hash the indices of $x',y'$
into $O(\eps^{-2}/\log(1/\eps))$ buckets with an
$O(\log(1/\eps))$-wise independent hash function, then in each bucket
we keep the product of the difference vector, restricted to the
indices mapped to that bucket, with the parity check matrix.
With constant probability, no bucket receives more than
$O(\log(1/\eps))$ indices where $x',y'$ differ.
We can thus use a Reed-Solomon code
with minimum distance only $O(\log(1/\eps))$, making the algorithm of
\cite{DORS08} fast enough for our purposes.

We note that our summary data structure in each level is in fact a
$k$-set structure, as defined by Ganguly \cite{Ganguly07}, that can be
used
to return a set of $k$ items undergoing insertions and deletions in a
data stream. While Ganguly's $k$-set structure uses near-optimal space
and has fast update time, it only works in the strict turnstile model
(i.e., it requires that each coordinate of $z = x-y$ is non-negative,
in which case the $L_1$-difference problem has a trivial solution:
maintain an $O(\log(nM))$-bit counter). This is due to the algorithm's
reliance on the identity $(\sum_{i=1}^n i\cdot z_i)^2 = (\sum_{i=1}^n
z_i)(\sum_{i=1}^n i^2\cdot z_i)$, for which there is no analogue
outside the strict turnstile setting. Using certain modifications
inspired by the Count-Min sketch \cite{cm05} it may be possible
to implement his algorithm in the turnstile model, though the resulting
space and time would be sub-optimal. In a different work, Ganguly
and Majumder
\cite{GM06} design a deterministic $k$-set structure based on
Vandermonde matrices, but the space required of this structure is
sub-optimal.

Our fast sketching algorithm for $L_1$-difference improves the running
time of an algorithm of Jayram and the second author \cite{jw08} 
by a $\Theta(\eps^{-2})$ factor for
estimating $L_1(L_2)$ of a matrix $A$, defined as the
sum of Euclidean lengths of the rows of $A$. As $L_1$-difference is a basic
primitive, we believe our algorithm is likely to have many further
applications.

\section{Preliminaries}\SectionName{prelim}
All space bounds mentioned throughout this paper are in bits, and all
logarithms are base $2$,  unless
explicitly stated otherwise. Running times are measured as the number
of standard machine word operations (integer arithmetic, bitwise
operations, and bitshifts). Each machine word is assumed to be
$\Omega(\log(nM))$ bits so that we can index each vector and do
arithmetic on vector entries in constant time. Also, for integer $A$,
$[A]$ denotes the set $\{1,\ldots, A\}$.

We now formally define the model in which our sketching procedure
runs. Alice receives $x\in\{-M,\ldots,M\}^n$, and Bob receives
$y\in\{-M,\ldots,M\}^n$. Both parties have access to a shared source of
randomness and must, respectively, output bit-strings $s(x)$ and
$s(y)$.
The requirement is that a third party can, given access to only $s(x)$
and $s(y)$, compute a value $Z$ such that $\Pr[|Z-||x-y||_1| >
\eps||x-y||_1] \le 1/3$ (recall $||x-y||_1\eqdef \sum_{i=1}^n |x_i -
y_i|$). The probability is over the randomness shared by Alice and
Bob, and the value $\eps\in (0,1]$ is a parameter given to all
parties. The goal is to minimize the lengths of $s(x)$ and $s(y)$, as
well
as the amount of time Alice and Bob each take to compute them.
Without loss of generality, throughout this
document we assume $x_i,y_i\ge 0$ for all $i$.  This promise can be
enforced by increasing all coordinates of $x,y$ by $M$, which
does not alter $||x-y||_1$.  Doing so increases the upper bound on
coordinate entries by a factor of two, but this alters our algorithm's
running time and resulting sketch size
by subconstant factors.

Since we present our sketching algorithm as a streaming algorithm
in \Section{body}, we now introduce some streaming notation.
We consider a vector $f = (f_1,f_2,\ldots,f_n)$ that is updated in a
stream as follows. The stream has
exactly $2n$ updates $(i_1,v_1),\ldots,(i_{2n},v_{2n})\in
[n]\times\{-M,\ldots,M\}$. Each update $(i,v)$ corresponds to the
action $f_i\leftarrow f_i + v$.
For each $j\in[n]$, there are exactly two
stream updates $(i,v)$ with $i=j$.  If these two stream updates are
$(i_{z_1},v_{z_1}), (i_{z_2},v_{z_2})$, then at most
one of $v_{z_1},v_{z_2}$ is negative, and at most one is positive. The
nonnegative update corresponds to adding $x_i$ to $f_i$, and the
nonpositive update corresponds to subtracting $y_i$ from $f_i$ (recall
we assumed $x_i,y_i\ge 0$). We make no restriction on the
possible values for $z_1$ and $z_2$. That is, our algorithm
functions correctly even if the stream
presents us with an adversarial permutation of the $2n$ coordinates
$x_1,\ldots,x_n,y_1,\ldots,y_n$. At the end of the stream $||f||_1 =
||x-y||_1$, so our streaming algorithm must
approximate $||f||_1$. For Alice and Bob to use our streaming
algorithm for sketching, Alice runs the algorithm with updates
$(i,x_i)$ for each $1\le i\le n$, and Bob separately runs the
algorithm (using the same random bits) with updates $(i,y_i)$. The
sketches they produce are simply the contents of the algorithm's
memory at the end of the stream.  It is a consequence of how our
algorithm works that these sketches can be
efficiently combined by a third party to approximate $||f||_1$.

\section{Main Streaming Algorithm}\SectionName{body}
Throughout this section we assume $\eps\ge 1/\sqrt{n}$.  Otherwise, we
can compute $||f||_1$ exactly by keeping the entire vector in memory
using $O(n\log M) = O(\eps^{-2}\log M)$ space with constant update
time.

\subsection{Handling Small $L_1$}\SectionName{kset}
We give a subroutine \textsc{TwoLeveLEstimator}, described in
\Figure{two-level}, to
compute $L_1$ exactly when promised that $L_1\le k$ for some parameter
$k\ge 2$. We assume that integers polynomially large in $k$
fit in a machine word, which will be true in our use of this
subroutine later.

In \Figure{two-level}, we assume we have
already calculated a prime $p$ satisfying
\begin{equation}\EquationName{cdef}
C\le p\le 2C,\ C = 4\cdot (5\ceil{\log k} + 24)^2\cdot\ceil{k/\log k} + 1
\end{equation}
(the choice of $p$ will be justified later), along with a
generator $g$ for the multiplicative group $\mathbb{F}_p^*$.  We also
precalculate logarithm tables $T_1,T_2$ such that $T_1[i] = g^i\mod p$
and $T_2[x]
= \dlog(x)$, where $0\le i\le p-2$ and $1\le x\le p-1$. 
Here
$\dlog(x)$ is the discrete logarithm of $x$ (i.e. the $i\in \GF(p)$
such that $g^i \equiv x \mod p$).

The subroutine \textsc{TwoLevelEstimator} makes calls to the following
algorithm given in \cite{DORS08}.

\begin{theorem}[Dodis {\it et al.} {\cite[Lemma
    E.1]{DORS08}}]\TheoremName{dodis}
Let $p$ be prime and $r = (r_x)_{x\in\mathbb{F}_p^*}$ have at most $s$
non-zero
entries ($2s + 1 < p$). Given $\sum_{x\in\mathbb{F}_p^*} r_x x^i$
for $i\in[2s]$, there is an algorithm to recover
$\{(x,r_x)|r_x\neq 0\}$ which uses
$O(s^2 + s(\log s)(\log\log s)(\log p))$ field operations over
$\GF(p)$.\afterproof
\end{theorem}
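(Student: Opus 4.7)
The plan is to recognize this as classical syndrome decoding / sparse recovery from power-sum measurements (equivalently, Reed--Solomon decoding with erasures known to lie in $\mathbb{F}_p^*$). Let $S = \{x \in \mathbb{F}_p^* : r_x \neq 0\}$ with $|S| \le s$, and define the \emph{error-locator polynomial}
\[
\Lambda(z) \;=\; \prod_{x \in S} (1 - x z) \;=\; 1 + \lambda_1 z + \cdots + \lambda_s z^s.
\]
Writing $p_i = \sum_x r_x x^i$ for the given power sums, Newton's identities (equivalently, the fact that $\sum_{x \in S} r_x x^i \Lambda(1/x) = 0$ for every $i$) yield the Hankel/Toeplitz system
\[
p_{i+s} + \lambda_1 p_{i+s-1} + \cdots + \lambda_s p_i \;=\; 0 \qquad (1 \le i \le s),
\]
so knowing $p_1,\ldots,p_{2s}$ determines $\Lambda$ uniquely.

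First I would recover $\Lambda$ by running the Berlekamp--Massey algorithm on the sequence $p_1,\ldots,p_{2s}$, which solves the above Hankel system in $O(s^2)$ field operations over $\GF(p)$. Having $\Lambda$, the support $S$ consists of the reciprocals of the roots of $\Lambda$ in $\mathbb{F}_p^*$; this is where the $\log p$ factor enters. I would use equal-degree root finding (Cantor--Zassenhaus style): reduce $z^{(p-1)/2} \bmod \Lambda(z)$ by repeated squaring and take $\gcd$ with $\Lambda$ to split it, then recurse on the factors. Each squaring modulo a polynomial of degree $\le s$ costs $O(s \log s \log \log s)$ using fast polynomial multiplication (Sch\"onhage--Strassen over $\GF(p)$), and there are $O(\log p)$ squarings plus $O(\log s)$ levels of recursion, giving the claimed $O(s \log s \log \log s \log p)$ bound for locating the $x \in S$.

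Once $S$ is known, recovering the values $r_x$ amounts to solving a transposed Vandermonde system $V^\top r = (p_1,\ldots,p_s)^\top$, where $V_{ij} = x_j^i$ for $x_j \in S$. Since the nodes $x_j$ are distinct elements of $\mathbb{F}_p^*$ (ensured by $2s+1 < p$, which lets Berlekamp--Massey uniquely identify $\Lambda$), $V$ is invertible, and inversion via Lagrange interpolation costs $O(s^2)$ operations. Summing the three stages yields the stated $O(s^2 + s \log s \log \log s \log p)$ field operations.

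The main obstacle in this plan is the root-finding step: the $O(s^2)$ Berlekamp--Massey and the Vandermonde solve are textbook, but achieving the $s \log s \log \log s \log p$ bound for extracting roots in $\mathbb{F}_p$ requires combining fast polynomial arithmetic with a careful analysis of Cantor--Zassenhaus so that the $\log p$ cost of the repeated-squaring step is not multiplied by more than one polynomial-multiplication factor. Verifying that the splitting at each level behaves well enough (in expectation, with cheap repetition to make it deterministic or Las Vegas) is the technical part I would need to track most carefully.
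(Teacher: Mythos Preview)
The paper does not prove this statement: it is quoted verbatim as a black-box result from \cite{DORS08} and closed immediately with a QED symbol, so there is no proof in the paper to compare your proposal against. Your outline is the standard syndrome-decoding pipeline (Berlekamp--Massey for the locator polynomial, Cantor--Zassenhaus with fast polynomial arithmetic for the roots, then a Vandermonde solve for the weights), which is indeed what underlies the cited lemma; the sketch is correct in spirit and in the claimed operation counts, with the root-finding stage being, as you note, the only place requiring real care to keep the $\log p$ factor from being multiplied by more than one fast-multiplication cost.
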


\begin{figure*}
\begin{center}
\fbox{
\parbox{6.375in} {
\underline{Subroutine \textsc{TwoLevelEstimator}}:\\

\texttt{\hspace{.1in} // Compute $||f||_1$ exactly when promised
  $||f||_1\le k$. The value $p$ is as in \Equation{cdef}.}
\begin{enumerate}
\addtolength{\itemsep}{-1mm}
\item Set $t = 2\ceil{\log k} + 12$ and $s = 2t + \ceil{\log k}$.
  Pick a random $h_1:[n]\rightarrow
  [\ceil{k/\log k}]$ from a $t$-wise independent hash family 
  and a random $h_2:[n]\rightarrow [p - 1]$ from a
  pairwise independent family.
\item For each $j\in[\ceil{k/\log k}]$
  maintain $2s$ counters $X_1^j,\ldots,X_{2s}^j$ modulo
  $p$, initialized to $0$.
 \item Upon seeing stream update $(i,v)$, increment $X_z^{h_1(i)}$ by
  $v\cdot (h_2(i))^z$ for $z\in[2s]$.
\item At the stream's end, for each $j\in[\ceil{k/\log k}]$, attempt
  to recover the non-zero entries of an
  $s$-sparse vector $f_j=((f_j)_x)_{x\in\mathbb{F}_p^*}$
  satisfying $\sum_{x\in\mathbb{F}_p^*} ((f_j)_x) x^z = X_z^j$ for
  each $z\in [2s]$ using \Theorem{dodis}.
\item Define $\sigma:\GF(p)\rightarrow \mathbb{Z}$ to be
  such that $\sigma(\alpha)$ equals $\alpha$ if $\alpha \le p/2$, and
  equals $\alpha - p$ otherwise.\\ Output 
  $\sum_{j=1}^{\ceil{k/\log k}}\sum_{(f_j)_x\neq 0} |\sigma((f_j)_x)|$.
\end{enumerate}
}}
\end{center}
\caption{\textsc{TwoLevelEstimator} subroutine pseudocode}\FigureName{two-level}
\end{figure*}

The proof of correctness of \textsc{TwoLevelEstimator} relies in part
on the following lemma.

\begin{lemma}[Bellare and Rompel {\cite[Lemma
    2.3]{BR94}}]\LemmaName{br94}
Let $X_i\in [0,1]$, $1\le i\le n$, be $t$-wise independent for $t\ge4$
an even integer,
$X = \sum_{i=1}^n X_i$, and $A>0$.  Then $\Pr[|X - \E[X]| \ge A] \le
8\left(\frac{t\E[X] + t^2}{A^2}\right)^{t/2}$. \afterproof
\end{lemma}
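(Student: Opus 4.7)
The plan is the $t$-th moment method. Centring with $Y_i = X_i - \E[X_i]$ gives $|Y_i|\leq 1$, $\E[Y_i]=0$, and the $Y_i$ remain $t$-wise independent; set $Y = \sum_i Y_i = X - \E[X]$. Since $t$ is even, $Y^t\geq 0$ and Markov yields
\[
\Pr\bigl[|X-\E[X]|\geq A\bigr] \;=\; \Pr[Y^t \geq A^t] \;\leq\; \frac{\E[Y^t]}{A^t},
\]
reducing the claim to the moment bound $\E[Y^t] \leq 8(t\mu + t^2)^{t/2}$, where $\mu = \E[X]$.

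To control $\E[Y^t]$ I would expand
\[
\E[Y^t] \;=\; \sum_{(i_1,\ldots,i_t)\in[n]^t} \E[Y_{i_1}\cdots Y_{i_t}]
\]
and classify each tuple by the partition $P$ of $[t]$ it induces (with $j\sim j'$ iff $i_j=i_{j'}$). By $t$-wise independence the expectation factors across the blocks of $P$, and if $P$ contains a singleton block then that factor is $\E[Y_i]=0$ and the whole term vanishes. Hence only partitions whose blocks all have size $\geq 2$ contribute, and such a $P$ has at most $t/2$ blocks. For any block of size $m\geq 2$, $|Y_i|\leq 1$ gives $|\E[Y_i^m]| \leq \E[Y_i^2] \leq \E[X_i^2] \leq \E[X_i]$ since $X_i\in[0,1]$. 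Summing over index assignments (one distinct index per block), the contribution of all partitions with exactly $b$ blocks is at most $S_2(t,b)\,\mu^b$, where $S_2(t,b)$ is the number of unordered partitions of $[t]$ into $b$ blocks each of size $\geq 2$.

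The main obstacle is the final combinatorial step: showing $\sum_{b=1}^{t/2} S_2(t,b)\,\mu^b \leq 8(t\mu + t^2)^{t/2}$. I would attack this via the exponential generating function identity $\sum_{t\geq 0} S_2(t,b)\,x^t/t! = (e^x - 1 - x)^b/b!$, isolating the dominant "perfect-matching" contribution $S_2(t,t/2) = (t-1)!! \approx (t/e)^{t/2}$ (which accounts for the $(t\mu)^{t/2}$ term) and bounding the subleading $b<t/2$ contributions by noting that each block of size larger than $2$ costs a factor of roughly $t^2$. Reindexing $b = t/2 - j$, the bound rearranges into a binomial expansion matching $(t\mu + t^2)^{t/2}$ up to the absolute constant $8$. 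Dividing by $A^t$ then yields the stated tail inequality.
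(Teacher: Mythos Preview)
The paper does not prove this lemma; it is quoted verbatim from Bellare and Rompel \cite[Lemma~2.3]{BR94} and marked with a \verb|\afterproof| box immediately after the statement, with no proof text in between. So there is nothing in the present paper to compare your argument against.

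That said, your outline is exactly the standard $t$-th-moment argument that Bellare and Rompel themselves use: centre, apply Markov to the even moment, expand, kill all partitions containing a singleton block via $t$-wise independence and $\E[Y_i]=0$, bound each surviving block's factor by $\E[Y_i^2]\le \E[X_i]$, and then handle the combinatorics of partitions into blocks of size at least two. The only part you leave genuinely unfinished is the last combinatorial inequality $\sum_{b\le t/2} S_2(t,b)\mu^b \le 8(t\mu+t^2)^{t/2}$; your sketch via the dominant perfect-matching term and the ``each oversized block costs $\approx t^2$'' heuristic is the right intuition, though in a full write-up you would want to make the constant-tracking explicit (Bellare--Rompel do this directly rather than through the EGF). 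Nothing in your proposal is wrong or off-track.
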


\begin{theorem}\TheoremName{two-level}
Ignoring the space to store the hash functions $h_1,h_2$ and tables
$T_1,T_2$, the algorithm \textsc{TwoLevelEstimator}
uses $O(k\log k)$ bits of space.  The hash functions $h_1,h_2$ and
tables $T_1,T_2$ require an additional $O((\log
k)(\log n) + k\log^2 k)$ bits.  The time to
process a stream update is $O(\log k)$.  If $L_1 \le k$, the final
output value of \textsc{TwoLevelEstimator} 
equals $L_1$ exactly with probability at least $3/4$.
\end{theorem}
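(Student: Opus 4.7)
The plan is to verify the three quantitative bounds (space, hash-function/table storage, and per-update time) by direct counting, and then reduce the correctness claim to bounding two bad events.

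For space, I would note that $p = O(k\log k)$ by \Equation{cdef}, so each of the $2s = O(\log k)$ counters in each of the $\lceil k/\log k\rceil$ buckets fits in $O(\log k)$ bits, giving $O(k\log k)$ total. The hash $h_1$ needs $O(t\log n) = O(\log k\log n)$ bits via the standard polynomial construction, $h_2$ needs $O(\log n+\log k)$ bits, and each of $T_1, T_2$ has $p$ entries of $O(\log p)$ bits, contributing $O(k\log^2 k)$ bits. For the update time, $h_1(i)$ is evaluated by Horner's rule in $O(t)=O(\log k)$ time, $h_2(i)$ in $O(1)$, and each of the $2s$ powers $(h_2(i))^z \bmod p$ is produced in $O(1)$ via one look-up in $T_2$, a multiplication modulo $p-1$, and one look-up in $T_1$.

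The heart of the argument is correctness. Let $S = \{i : f_i\neq 0\}$; since $f$ is integer-valued and $L_1\le k$, we have $|S|\le k$. I would introduce two bad events:
\begin{align*}
A &: \text{some bucket contains more than } s \text{ elements of } S \text{ under } h_1, \\
B &: \text{some pair } i_1\neq i_2 \in S \text{ collides under both } h_1 \text{ and } h_2.
\end{align*}
If neither occurs, then the ``natural'' vector $(f_j)_x := \sum_{i\in S :\, h_1(i)=j,\,h_2(i)=x} f_i$ is at most $s$-sparse, automatically satisfies $X_z^j = \sum_x (f_j)_x x^z$, and is in fact the unique such sparse solution, so \Theorem{dodis} recovers it exactly (the hypothesis $2s+1<p$ follows from \Equation{cdef}). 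Since $|f_i|\le L_1\le k < p/2$, the map $\sigma$ also un-wraps each residue to the correct signed integer, so the output telescopes to $\sum_i |f_i|=L_1$.

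The remaining and main obstacle is showing $\Pr[A\cup B]\le 1/4$. For $B$, a union bound over $\binom{|S|}{2}$ pairs, using the independence of $h_1$ and $h_2$ and their pairwise collision probabilities, yields $\Pr[B]\le \binom{k}{2}/(\lceil k/\log k\rceil (p-1))$, and the generous choice of $p$ in \Equation{cdef} drives this well below $1/8$. For $A$ I would fix a bucket $j$, set $N_j=\sum_{i\in S}\mathbf{1}[h_1(i)=j]$, and apply \Lemma{br94} to these $t$-wise independent $0/1$ variables. Since $\E[N_j]\le \log k$ and $s-\log k\ge 2t$, the Bellare--Rompel bound becomes at most $8\bigl((t\log k+t^2)/(4t^2)\bigr)^{t/2}\le 8(3/8)^{t/2}$, using $\log k/t\le 1/2$ which follows from $t=2\lceil\log k\rceil+12$; a further union bound over the $\lceil k/\log k\rceil$ buckets still leaves this well below $1/8$ for all $k\ge 2$. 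Checking that this calibration of $s$ and $t$ really makes both tails sum to under $1/4$ is the one step that requires careful arithmetic.
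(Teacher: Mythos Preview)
Your proposal is correct and follows essentially the same approach as the paper: the same space/time accounting, the same decomposition into an ``overflow'' event and a ``double-collision'' event, and the same use of \Lemma{br94} for the former. The one small difference is in bounding the collision event: you bound $\Pr[B]$ unconditionally via a union bound over all $\binom{|S|}{2}$ pairs and the product of the two collision probabilities (using the independence of $h_1,h_2$), whereas the paper instead bounds $\Pr[\neg\mathcal{Q}'\mid\mathcal{Q}]$ by first conditioning on no bucket overflow and then using $|h_1^{-1}(j)\cap I|\le s$ inside a Markov argument. Your direct route is a bit cleaner and still comfortably below $1/8$ given the choice of $p$ in \Equation{cdef}; the paper's conditional route is what motivates the exact constant in $C$ (namely $4s^2\lceil k/\log k\rceil$), since it makes $\E[Y\mid\mathcal{Q}]\le 1/8$ drop out immediately.
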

\begin{proof}
Aside from storing $h_1,h_2,T_1,T_2$,  the number of counters is
$2s\ceil{k/\log k} = O(k)$, each of size $O(\log p) = O(\log k)$ bits,
totaling $O(k\log k)$ bits.
The space to store $h_1$ is $O((\log k)(\log n))$,
and the space to store $h_2$ is $O(\log n)$ \cite{CW79}.  The
tables $T_1,T_2$ each have $p-1 = O(k\log k)$ entries, each
requiring $O(\log p) = O(\log k)$ bits. Processing a stream update
requires evaluating $h_1,h_2$, taking $O(\log k)$ time and $O(1)$
time, respectively \cite{CW79}. 

As for update time, each stream token requires updating $2s = O(\log k)$
counters (Step 3). Each counter update can be done in constant time
with the help of table lookup since $(h_2(i))^z = g^{z\cdot
  \dlog(h_2(i))} = T_1[(z\cdot T_2[h_2(i)])\mod (p-1)]$.

We now analyze correctness.  Define $I = \{i\in [n] : f_i\neq 0\hbox{
  at the stream's end}\}$.
Note $|I| \le L_1 \le k$.  For
$j\in[\ceil{k/\log k}]$, define the random variable $Z_j =
|h_1^{-1}(j)\cap I|$.  We now define two events.

Let $\mathcal{Q}$ be the event that $Z_j \le s = 2t + \ceil{\log k}$
for all $j\in[\ceil{k/\log k}]$.

Let $\mathcal{Q}'$ be the event that there do not exist distinct
$i,i'\in I$ with both $h_1(i) = h_1(i')$ and $h_2(i) = h_2(i')$.

We first argue that, conditioned on both $\mathcal{Q},\mathcal{Q}'$
holding, the output of \textsc{TwoLevelEstimator} is
correct. Note $p-1 \ge 4s^2\ceil{k/\log k} \ge 100k\log k$ (recall the
definition of $s$ in Step 1 of \Figure{two-level}). If $\mathcal{Q}'$
occurs,
$|h_2^{-1}(i)\cap h_1^{-1}(j)\cap I| \le 1$ for all $i\in [p-1]$ and
$j\in[\ceil{k/\log k}]$.  One can then view $X_z^j$ as holding
$\sum_{x\in\mathbb{F}_p^*}(r_j)_x x^z$, where $(r_j)_x$ is the frequency
(modulo $p$) of the unique
element in the set $h_2^{-1}(i)\cap h_1^{-1}(j)\cap I$ (or $0$ if that
set is empty).  Conditioned on $\mathcal{Q}$, every $r_j$ is
$s$-sparse, so we correctly recover $r_j$ in Step 4 by
\Theorem{dodis} since $2s + 1 =
5\ceil{\log k} + 13 < 100k\ceil{\log k} < p$.  Note that $p$ is
strictly greater than twice the absolute value of the largest
frequency since $L_1 \le k$, and thus negative frequencies
are strictly above $p/2$ in $\GF(p)$, and positive frequencies are
strictly below $p/2$. Thus, given that the $r_j$
are correctly recovered, $\sigma$ correctly recovers the
actual frequencies in Step 5, implying correctness of the final
output.

Now we proceed to lower bound $\Pr[\mathcal{Q}\wedge
\mathcal{Q}']$. First
we show $\mathcal{Q}$ occurs with probability at least $7/8$.
If we let $Z_{j,i}$ indicate $h_1(i) = j$, then note the random
variables $\{Z_{j,i}\}_{i\in I}$ are $t$-wise independent and $Z_j =
\sum_{i\in I} Z_{j,i}$. Also, $\E[Z_j] = |I|/\ceil{k/\log k} \le \log k$.  
Noting 
$$\Pr[\neg\mathcal{Q}] \le \Pr[|Z_j - \E[Z_j]| \ge 2t]$$
then setting $A = 2t$ and applying \Lemma{br94},
\begin{eqnarray*}
\Pr[|Z_j - \E[Z_j]| \ge 2t] &\le&
8\left(\frac{t\E[Z_j] + t^2}{(2t)^2}\right)^{t/2}\\
&\le& 8\left(\frac{2t^2}{4t^2}\right)^{\log k + 6} \le \frac{1}{8k}
\end{eqnarray*}
since $\E[Z_j] \le t$.  A union bound implies $\Pr[\mathcal{Q}] \ge
7/8$.

Now we analyze $\Pr[\mathcal{Q}'|\mathcal{Q}]$.  Let $Y_{i,i'}$ be a
random variable indicating $h_2(i) = h_2(i')$ and define the random
variable $Y =
\sum_{(i,i')\in\binom{I}{2},h_1(i)=h_1(i')} Y_{i,i'}$. Note
$\mathcal{Q'}$ is simply the event that $Y = 0$.  We have
\begin{eqnarray*}
\E[Y] &=& \sum_{j=1}^{\ceil{k/\log k}}\E\left[
  \sum_{(i,i')\in\binom{h_1^{-1}(j)\cap I}{2}}
  \Pr[h_2(i)=h_2(i')]\right]\\
&\le& \sum_{j=1}^{\ceil{k/\log k}} \frac{\E[|h_1^{-1}(j)\cap
  I|^2]/2}{p-1}\\
&\le&
\sum_{j=1}^{\ceil{k/\log k}} \frac{\E[|h_1^{-1}(j)\cap I|^2]/2}{4s^2\ceil{k/\log
      k}}
\end{eqnarray*}
where the expectation on the right side of the first equality is
over the random choice of
$h_1$, and the probability is over the random choice of $h_2$.  The
first inequality holds by pairwise independence of $h_2$.
Conditioned on $\mathcal{Q}$, $|h_1^{-1}(j)\cap I| \le s$ for all
$j$ so that $\E[Y|\mathcal{Q}] \le 1/8$, implying
$\Pr[\mathcal{Q'}|\mathcal{Q}] = 1 - \Pr[Y
\ge 1|\mathcal{Q}] \ge 7/8$ by Markov's Inequality.

In total, we have $\Pr[\mathcal{Q}\wedge\mathcal{Q}'] =
\Pr[\mathcal{Q}]\cdot \Pr[\mathcal{Q}'|\mathcal{Q}] \ge (7/8)^2 >
3/4$, and the claim is proven.
\end{proof}

\begin{remark}\RemarkName{hash-stuff}
In Step 1 of \Figure{two-level}, we twice pick a hash function
$h:[a]\rightarrow [b]$ from an $m$-wise independent family for
some integers $m$ and $a\neq b$ (namely, when picking $h_1$ and
$h_2$). However, known
constructions \cite{CW79} have $a=b$, with $a$ a prime power.  This is
easily circumvented.
When we desire an $h$ with unequal domain size $a$ and range size $b$,
we can pick a prime $\ell\ge2\cdot
\max\{a,b\}$ then pick an $m$-wise independent hash function
$h':[\ell]\rightarrow [\ell]$ and define $h(x) \eqdef (h'(x)\mod b) +
1$. The family of such $h$ is still $m$-wise independent, and by
choice of $\ell$, no range value is more than twice more likely than any
other, which suffices for our application with a slight worsening of
constant factors.
\end{remark}

The following theorem analyzes the pre-processing and post-processing
complexity of \textsc{TwoLevelEstimator}.

\begin{theorem}\TheoremName{tle-preprocess}
Ignoring the time needed to find the prime $\ell$ in \Remark{hash-stuff},
the pre-processing time of \textsc{TwoLevelEstimator} before
seeing the stream is $O(k\log k)$, and the post-processing time is
$O(k\log k\log\log k\log\log\log k)$.
\end{theorem}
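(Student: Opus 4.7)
The plan is mostly accounting: enumerate the steps of the subroutine that execute before and after the stream, and bound each step. For pre-processing the work breaks into (i) choosing the prime $p$ together with a generator $g$ of $\mathbb{F}_p^*$, (ii) building the logarithm tables $T_1,T_2$, (iii) allocating and zeroing the counters, and (iv) sampling $h_1,h_2$. For post-processing the work is dominated by invoking the decoder of \Theorem{dodis} on each of the $\ceil{k/\log k}$ buckets and summing the recovered $|\sigma((f_j)_x)|$.

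For (i), \Equation{cdef} gives $p=O(k\log k)$, so $p$ comfortably fits in a machine word. A prime in $[C,2C]$ exists by Bertrand's postulate and can be located in $O(\polylog(k))$ time by random sampling combined with any polynomial-time primality test; a generator $g$ can then be found in expected $O(\polylog(k))$ time by factoring $p-1$ via trial division in $O(\sqrt{p})$ time and testing random candidates against the usual prime-power conditions, which together are $o(k\log k)$. For (ii), $T_1$ is built in $O(k\log k)$ time by $p-2$ successive multiplications by $g$ modulo $p$, after which $T_2$ is filled by a single linear pass over $T_1$. For (iii), there are $2s\ceil{k/\log k}=O(k)$ counters to initialize. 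For (iv), $h_1$ is specified by $t=O(\log k)$ coefficients of $O(\log n)$ bits and $h_2$ by $O(1)$ such coefficients, each samplable in $O(\log k)$ time. Summing these yields the $O(k\log k)$ pre-processing bound.

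For post-processing, each of the $\ceil{k/\log k}$ buckets invokes the algorithm of \Theorem{dodis} with sparsity parameter $s=O(\log k)$ over $\mathbb{F}_p$ with $\log p=O(\log k)$. That algorithm uses $O(s^2 + s(\log s)(\log\log s)(\log p))=O(\log^2 k\cdot \log\log k\cdot \log\log\log k)$ field operations, each of which is $O(1)$ word operations since $p$ fits in a word and the tables $T_1,T_2$ allow multiplication via lookup. Multiplying by the number of buckets gives the claimed $O(k\log k\cdot \log\log k\cdot \log\log\log k)$ bound. The final summation in Step 5 of $|\sigma((f_j)_x)|$ over recovered nonzero entries costs $O(k)$ total, since $f$ has at most $L_1\le k$ nonzero coordinates across all buckets; this is a lower-order term.

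The main obstacle is simply bookkeeping the parameters: one must check that $p=O(k\log k)$ and $s=O(\log k)$ chosen in \Figure{two-level} are small enough that table construction fits within the $O(k\log k)$ pre-processing budget and that each per-bucket decoding fits within $O(\log^2 k\cdot \log\log k\cdot \log\log\log k)$. Once these parameter sizes are tracked, the rest is an additive sum over the enumerated steps.
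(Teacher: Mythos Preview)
Your proposal is correct and follows essentially the same accounting as the paper: the paper likewise observes that finding $p$ and $g$ takes $\polylog(k)$ time (citing \cite{BM84} rather than your trial-division argument), that filling $T_1,T_2$ dominates pre-processing at $O(p)=O(k\log k)$, and that post-processing is $\ceil{k/\log k}$ invocations of \Theorem{dodis} with $s=O(\log k)$ and $\log p=O(\log k)$, each costing $O(\log^2 k\cdot\log\log k\cdot\log\log\log k)$ constant-time $\GF(p)$ operations thanks to the precomputed tables. The only cosmetic difference is that you bound the generator-finding step via $O(\sqrt{p})$ trial-division factoring of $p-1$ instead of invoking the $\polylog$ result of \cite{BM84}; both fit within the $O(k\log k)$ budget.
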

\begin{proof}
We first discuss the pre-processing time.  It is known that the prime
$p$ and generator $g$ for $\mathbb{F}_p^*$ can be found in time
$\polylog(C) = \polylog(k)$ (see the proof of Theorem 4 in
\cite{BM84}). Once we have $p,g$, filling in
$T_1,T_2$ takes $O(p) = O(k\log k)$ time, which dominates the
pre-processing time.  The time to allocate the
$O(k)$ counters $X_z^j$ is just $O(k)$.

The post-processing work is done in Steps 4
and 5 in \Figure{two-level}. For Step 4, there are $O(k/\log k)$
values of $j$, for each of which we run the algorithm of
\Theorem{dodis} with $s = O(\log k)$ and $p = O(k\log k)$, thus
requiring a total of $O(k\log k\log\log k\log\log\log k)$ field
operations over $\GF(p)$.  Since we precalculate the table $T_2$, we
can do all $\GF(p)$ operations in constant time, including
division. In
Step 5 we need to sum the absolute values of $O(\log k)$ non-zero
entries of $O(k/\log k)$ vectors $f_j$, taking time $O(k)$.
\end{proof}

\begin{remark}
Our subroutine \textsc{TwoLevelEstimator} uses the fact that since
$L_1\le k$ and $f$ is an integer vector, it must be the case that $L_0
\le k$.  From here, what we develop is a {\em $k$-set structure} as
defined by Ganguly \cite{Ganguly07}, which is a data structure
that allows one to recover the $k$-sparse vector $f$.  In fact, any
$k$-set structure operating in the turnstile model (i.e., where some
$f_i$ can be negative) would have sufficed in place of
\textsc{TwoLevelEstimator}.  We develop our
particular subroutine since previous approaches were either less
space-efficient or did not work in the turnstile setting
\cite{GM06,Ganguly07}.  We remark that at the cost of an
extra $O(\log^2 k)$ factor in space, but with the benefit of only
$O(1)$ post-processing time, one can replace
\textsc{TwoLevelEstimator} with an alternative scheme.  Namely, for each
$j\in [\ceil{k/\log k}]$, attempt to perfectly the hash the $O(\log
k)$ coordinates contributing to $||f||_1$ mapped to $j$ under $h_1$ by
pairwise independently hashing into
$O(\log^2 k)$ counters, succeeding with
constant probability.  Each counter holds frequency sums modulo $p$.
By repeating $r=\Theta(\log k)$
times and taking the maximum sum of counter absolute values over any
of the $r$ trials, we succeed in finding the sum of frequency absolute
values of items mapping to $j$ under $h_1$ with probability $1 -
1/\poly(k)$. Thus
by a union bound, we recover $||f||_1$ with probability $99/100$ by
summing up over all $j$.  The
estimate of $||f||_1$ can be maintained on the fly during updates to
give $O(1)$ post-processing, and
updates still take only $O(\log k)$ time.
\end{remark}

\subsection{The Full Algorithm}\SectionName{full-stream}
Our full algorithm requires, in part, a constant factor approximation
to the $L_1$-difference.  To obtain this, we can use the algorithm of
Feigenbaum {\it et al.} \cite{FKSV02} with $\eps$ a constant.

\begin{theorem}[Feigenbaum {\it et al.} {\cite[Theorem
    12]{FKSV02}}]\TheoremName{feig}
There is a one-pass streaming algorithm for $(1\pm\eps)$-approximating
the $L_1$-difference using $O(\eps^{-2}\log(nM))$
space with update time $O(\eps^{-2}\log(nM))$, and succeeding with
probability at least $19/20$.\afterproof
\end{theorem}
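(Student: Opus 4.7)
Since this theorem is attributed to Feigenbaum et al., my plan follows the approach sketched in the introduction: embed $L_1$-difference into the $L_2^2$ of a $\{-1,0,1\}$-valued vector, estimate the latter by an AMS-style $\pm 1$ sketch, and keep each stream update fast via range-summable hash functions.

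First I would set up the embedding. For each coordinate $i\in[n]$, unary-encode $x_i$ and $y_i$ as the indicator vectors of the blocks $\{M(i-1)+1,\ldots,M(i-1)+x_i\}$ and $\{M(i-1)+1,\ldots,M(i-1)+y_i\}$ inside the universe $[nM]$; call these vectors $a,b\in\{0,1\}^{nM}$. Since $a-b\in\{-1,0,1\}^{nM}$, entry-wise squaring and absolute value coincide, so $\norm{x-y}_1=\sum_i |x_i-y_i|=\norm{a-b}_2^2$. This reduces the problem to $(1\pm\eps)$-approximating $\norm{a-b}_2^2$ in a turnstile stream over $[nM]$.

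Next I would apply the standard AMS sketch. Pick a $4$-wise independent $h:[nM]\to\{-1,+1\}$ and maintain $X=\sum_j h(j)(a_j-b_j)$ online; an Alice update $(i,x_i)$ contributes $\sum_{j=M(i-1)+1}^{M(i-1)+x_i} h(j)$ to $X$, and a Bob update subtracts the analogous sum. A direct calculation using $4$-wise independence gives $\E[X^2]=\norm{a-b}_2^2$ and $\Var[X^2]\le 2\norm{a-b}_2^4$, so by Chebyshev the average of $O(\eps^{-2})$ independent copies of $X^2$ is a $(1\pm\eps)$-estimator with constant probability, and the median of $O(1)$ such averages pushes the success probability to $19/20$. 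Each counter fits in $O(\log(nM))$ bits, giving total space $O(\eps^{-2}\log(nM))$.

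The main obstacle — and the heart of the FKSV02 contribution — is achieving the claimed update time, since a naive Alice update appears to require $\Omega(M)$ work (summing $h$ over a length-$M$ block per counter). I would resolve this by instantiating $h$ from a family of $4$-wise independent, range-summable $\pm 1$ hash functions for which any prefix sum $\sum_{j=1}^r h(j)$ is evaluable in $O(\log(nM))$ time; a range sum is then a difference of two prefix sums. With $O(\eps^{-2})$ parallel counters and $O(\log(nM))$ time per counter per update, total update time is $O(\eps^{-2}\log(nM))$ as claimed. Exhibiting such a family with polylogarithmic seed length (via the Nisan-style derandomization FKSV02 design) is the nontrivial ingredient; I would cite it from \cite{FKSV02} rather than reprove it, and verify that the seed, together with the $O(\eps^{-2})$ counters, fits within the stated space bound.
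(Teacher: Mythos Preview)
The paper does not prove this theorem at all: it is stated with an immediate $\blacksquare$ and attributed to \cite[Theorem~12]{FKSV02}, so the ``paper's proof'' is simply the citation. Your sketch is a faithful outline of the FKSV02 approach (embed $L_1$ into $L_2^2$ via unary expansion, apply the AMS second-moment estimator, and use range-summable $4$-wise independent $\pm 1$ families to make each update cost $O(\log(nM))$ per counter), and it matches the one-sentence summary the paper itself gives in the introduction. One small inaccuracy: the FKSV02 range-summable family is an explicit algebraic construction over $\GF(2^D)$ rather than a Nisan-style pseudorandom generator, so you should adjust that remark, but otherwise nothing more is needed here beyond the citation.
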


\begin{remark}
It is stated in \cite{FKSV02} that the update time
in \Theorem{feig} is $O(\eps^{-2}\mathrm{field}(\log(nM)))$,
where $\mathrm{field}(D)$ is the time to do arithmetic over
$\GF(2^D)$ (not including division).  Section 2.2 of
\cite{FKSV02} points out that
$\mathrm{field}(D) = O(D^2)$ na\"{i}vely. In fact, it suffices for the
purposes of their algorithm to work over $\GF(2^D)$ for the smallest
$D\ge\log(nM)$ such that $D = 2\cdot 3^{\ell}$, in which case a highly
explicit irreducible polynomial of degree $D$ over $\mathbb{F}_2[x]$
(namely $x^D + x^{D/2} + 1$ \cite[Theorem 1.1.28]{vanLint99})
can be used to perform $\GF(2^D)$ arithmetic in time
$O(D)$ in the word RAM model without any additional pre-processing
space or time.
\end{remark}

We also make use of the following algorithm due to Pavan and
Tirthapura \cite{PavanTir07}.

\begin{theorem}[Pavan and Tirthapura {\cite[Theorem
    2]{PavanTir07}}]\TheoremName{pavan}
Let $a,b,c,d,x,r,m$ be integers fitting in a machine word with $m>0$
and $a,b,c,d\in\{0,\ldots,m-1\}$.  There is an algorithm to
calculate $|\{i \ :\ (a\cdot(x+i) + b\mod m)\in [c, d],\ 0\le i\le r\}|$ in
time $O(\log(\min(a, r)))$ using $O(\log(r\cdot m))$
space.\afterproof
\end{theorem}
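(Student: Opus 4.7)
The plan is to reduce the counting problem to evaluating floor sums of the form $S(a, b, m, r) \eqdef \sum_{i=0}^{r} \lfloor (ai+b)/m \rfloor$, and then to compute $S$ by a recursion modeled on the Euclidean algorithm. As a preliminary, I would set $u = (ax+b) \bmod m$, so that the quantity of interest becomes $|\{i : 0 \le i \le r,\ (ai+u) \bmod m \in [c,d]\}|$; if $c > d$ the wrap-around interval is split into two non-wrapping pieces. For any non-wrapping sub-interval $[c',d']$ with $d'-c' < m$, the identity
\[
[(ai+u) \bmod m \in [c',d']] = \lfloor (ai+u-c')/m \rfloor - \lfloor (ai+u-d'-1)/m \rfloor
\]
is easily checked by noting that the right side counts multiples of $m$ in $[v-d',\,v-c']$, where $v = ai+u$, and this range contains at most one such multiple. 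Summing over $i \in [0,r]$ then expresses the count as an $O(1)$-term integer combination of values $S(a, b', m, r)$ for shifts $b'$ determined by $u, c', d'$ (after adjusting negative shifts via $S(a,b'+m,m,r) = S(a,b',m,r)+(r+1)$).

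To evaluate $S(a,b,m,r)$, I would first normalize via $S(a,b,m,r) = S(a \bmod m, b, m, r) + \lfloor a/m\rfloor \cdot r(r+1)/2$ and $S(a,b,m,r) = S(a, b \bmod m, m, r) + (r+1)\lfloor b/m\rfloor$, so that $0 \le a, b < m$. Then I would apply the classical swap identity
\[
S(a,b,m,r) = r M - S(m,\, m-b-1,\, a,\, M-1),
\]
where $M = \lfloor (ar+b)/m\rfloor$; this single step exchanges the roles of $a$ and $m$ and replaces $r$ with $M$, with base cases $a = 0$ (yielding $S = 0$) and $M = 0$ (also $S = 0$). Unrolling the recursion expresses $S$ as an alternating sum $\sum_k (-1)^k r_k M_k$, which can be accumulated iteratively without an explicit recursion stack.

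The main technical step is to bound the number of iterations by $O(\log \min(a,r))$. On the one hand, the sequence of $a$-values across successive levels is exactly the Euclidean sequence $a_{k+1} = m_k \bmod a_k$ starting from $(a,m)$, so the recursion halts after $O(\log a)$ levels by the standard analysis of the Euclidean algorithm. On the other hand, two consecutive swap levels multiply $r$ by a factor of at most $(a/m)\cdot((m\bmod a)/a) = (m \bmod a)/m$, and in either case ($a \le m/2$ or $a > m/2$) one has $m \bmod a < m/2$, so $r$ at least halves every two levels, giving $O(\log r)$ levels before it reaches zero. Whichever of $a$ or $r$ empties first bounds the depth, yielding $O(\log \min(a,r))$ iterations. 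Each iteration performs $O(1)$ word operations on integers of magnitude $O(rm)$ (since $S \le r^2$ and all $a_k, b_k, m_k \le m$), and the iterative implementation stores only a constant number of such words at any moment, giving total running time $O(\log \min(a,r))$ and working space $O(\log(rm))$, as claimed.
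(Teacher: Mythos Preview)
The paper does not prove this theorem at all: it is quoted verbatim from Pavan and Tirthapura \cite{PavanTir07} and closed immediately with a black square, with no argument given. So there is no in-paper proof to compare your proposal against.

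That said, your proposal is a correct and self-contained proof of the stated bound. The reduction to floor sums $S(a,b,m,r)=\sum_{i=0}^{r}\lfloor(ai+b)/m\rfloor$ via the indicator identity is standard and valid once $d'-c'<m$, and your swap identity $S(a,b,m,r)=rM-S(m,m-b-1,a,M-1)$ with $M=\lfloor(ar+b)/m\rfloor$ is exactly the lattice-point transposition that drives the Euclidean-style recursion. Your two depth arguments are both sound: the $a$-sequence is the Euclidean remainder sequence on $(a,m)$, giving $O(\log a)$ levels; and the two-step contraction of $r$ by a factor at most $(m\bmod a)/m<1/2$ gives $O(\log r)$ levels. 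Either bound terminates the recursion, yielding $O(\log\min(a,r))$ word operations. The space accounting is also fine: after normalization each partial sum is at most $(r+1)^2$, and all moduli and residues are bounded by $m$, so a constant number of $O(\log(rm))$-bit integers suffices in the iterative implementation. Two minor remarks: the theorem as stated has $c,d\in\{0,\dots,m-1\}$ with $[c,d]$ an ordinary interval, so your wrap-around split is unnecessary here (though harmless); and you should note explicitly that the initial normalization step already enforces $a<m$, so the Euclidean bound really is $O(\log a)$ rather than $O(\log\min(a,m))$.
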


\begin{figure*}
\begin{center}
\fbox{
\parbox{6.375in} {
\underline{Main Algorithm \textsc{L1-Diff}}:
\begin{enumerate}
\addtolength{\itemsep}{-1mm}
\item Set $\eps' = \eps/8$.
\item Pick a random hash function $h:[q]\rightarrow [q]$ from a
  pairwise independent family so that $h(x) = ax+b\mod q$ for some
  prime $q\in[2nM, 4nM]$ and $a,b\in\GF(q)$.
\item Initialize instantiations $\TLE_1,\ldots,\TLE_{\ceil{\log((\eps')^2
      nM)}}$ of \textsc{TwoLevelEstimator} with $k=\ceil{4/(\eps')^2}$.
  All instantiations share the same prime $p$, generator $g$, hash
  functions
  $h_1,h_2$, and logarithm tables $T_1,T_2$.
\item Upon seeing stream update $(i,v)$, let $v_j$ be the output of
  the algorithm from \Theorem{pavan} with inputs $a,b$ as in Step 2,
  $c=c_j=2^{\floor{\log q} - j}$, $d = d_j=2^{\floor{\log q} - j + 1} - 1$,
  $x = (i-1)M + 1$, $r = |v| - 1$, and $m = q$.  Feed the update
  $(i,\mathrm{sgn}(v)\cdot v_j)$ to $\TLE_j$ for
  $j=1,\ldots,\ceil{\log((\eps')^2 nM)}$. Let $R_j$ be the output of
  $\TLE_j$.
\item Run an instantiation $\TLE$ of \textsc{TwoLevelEstimator} in
  parallel with $k=\ceil{1/(\eps')^2}$
  which receives all updates, using the same $h_1,h_2,p,g,T_1,T_2$ of
  Step 2. Let its output be $R$.
\item Run the algorithm of \Theorem{feig} in parallel with error
  parameter $1/3$ to obtain a value $R'\in[L_1/2,L_1]$.
\item If $R' \le \ceil{1/(\eps')^2}$, output $R$. Otherwise, output
  $q\cdot 2^{\ceil{\log((\eps')^2 R')}-\floor{\log
      q}}R_{\ceil{\log((\eps')^2 R')}}$.
\end{enumerate}
}}
\end{center}
\caption{\textsc{L1-Diff} pseudocode}\FigureName{fullalg}
\end{figure*}

Our main algorithm, which we call \textsc{L1-Diff}, is described in
\Figure{fullalg}.  Both in \Figure{fullalg} and in the proof of
\Theorem{main-alg}, $\mathrm{sgn}$ denotes the function which takes as
input a real number $x$ and outputs $-1$ if $x$ is negative, and $1$
otherwise.

\begin{theorem}\TheoremName{main-alg}
The algorithm \textsc{L1-Diff} has update time $O(\log(\eps^2
nM)\log(M/\eps))$ and the space used is
$O(\eps^{-2}\log(1/\eps)\log(\eps^{-2}nM))$.
Pre-processing requires
$\polylog(nM) + O(\eps^{-2}\log(1/\eps)\log(\eps^{-2}nM))$ time.
Time $O(\eps^{-2}\log(1/\eps)\log\log(1/\eps)\log\log\log(1/\eps))$ is
needed for post-processing.
The output is $(1\pm\eps)L_1$ with probability at least $2/3$.
\end{theorem}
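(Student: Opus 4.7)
The plan is to reduce $L_1$-difference to counting set-symmetric-difference cardinalities in a hashed and sub-sampled universe, then use Chebyshev at the level chosen by the constant-factor estimate. Conceptually, view each stream update $(i,v)$ as inserting (if $v>0$) or deleting (if $v<0$) the $|v|$ consecutive integers $\{(i-1)M+1,\ldots,(i-1)M+|v|\}$ into the universe $[nM]\subset[q]$; after processing the whole stream the signed multiset symmetric difference of Alice's and Bob's induced sets has total cardinality $\|f\|_1$. At level $j$ an element $e$ survives iff $h(e)\in[c_j,d_j]$, an event of probability $p_j:=(d_j-c_j+1)/q=2^{\lfloor\log q\rfloor-j}/q$. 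The routine of \Theorem{pavan} computes, in $O(\log M)$ time, exactly how many of the $|v|$ integers associated with one update survive at level $j$, and feeding the signed count into $\TLE_j$ maintains an internal vector $f^{(j)}$ with $|f^{(j)}_i|$ equal to the surviving symmetric-difference count at coordinate $i$. Thus once $\TLE_j$ recovers, $\|f^{(j)}\|_1 = R_j$ is the total surviving count at level $j$.

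For correctness of the main branch, condition on $R'\in[\|f\|_1/2,\|f\|_1]$ (probability $\ge 19/20$ by \Theorem{feig} after rescaling its output by $3/4$) and set $j^*=\ceil{\log((\eps')^2 R')}$. From $(\eps')^2 R'\le 2^{j^*}<2(\eps')^2 R'$ and $q\in[2^{\lfloor\log q\rfloor},2^{\lfloor\log q\rfloor+1})$ one gets $\mu_{j^*}:=\E[R_{j^*}]=\|f\|_1 p_{j^*}\in(1/(4(\eps')^2),\,2/(\eps')^2]$. Because $h$ is pairwise independent the survival indicators of distinct universe elements are uncorrelated, so $\Var[R_{j^*}]\le\mu_{j^*}$, and Chebyshev gives
\[
\Pr\!\left[\,|R_{j^*}-\mu_{j^*}|>\eps\mu_{j^*}\,\right] \;\le\; \frac{1}{\eps^2\mu_{j^*}} \;\le\; \frac{4(\eps')^2}{\eps^2} \;=\; \frac{1}{16}
\]
using $\eps'=\eps/8$. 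On this event $R_{j^*}\le 2\mu_{j^*}\le 4/(\eps')^2=k$, so \Theorem{two-level} guarantees $\TLE_{j^*}$ recovers $R_{j^*}$ exactly, and the reported value $q\cdot 2^{j^*-\lfloor\log q\rfloor}R_{j^*}=R_{j^*}/p_{j^*}$ is a $(1\pm\eps)$-approximation of $\|f\|_1$. For the small-$L_1$ branch, $R'\le\ceil{1/(\eps')^2}$ implies $\|f\|_1\le 2R'$ is at most the side TLE's $k$ (absorbable by a constant-factor increase of $k$), so $R=\|f\|_1$ by \Theorem{two-level}. A union bound over the Feigenbaum, Chebyshev, and TLE failure events keeps total failure below $1/3$ once the constants inside \textsc{TwoLevelEstimator} (namely $t$ and the factor in the definition of $p$) and the failure parameter of \Theorem{feig} (via standard median amplification on a constant number of independent copies) are tuned so each bad event has sufficiently small constant probability.

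The resource bounds follow by direct accounting. Space is dominated by the $\ceil{\log((\eps')^2 nM)}$ copies of $\TLE_j$ at $k=\Theta(\eps^{-2})$, contributing $O(\eps^{-2}\log(1/\eps)\log(\eps^{-2}nM))$ bits via \Theorem{two-level}; the side TLE, the Feigenbaum subroutine, and the hash functions and logarithm tables shared across levels add lower-order terms. Each update performs $O(\log(\eps^2 nM))$ Pavan--Tirthapura calls of $O(\log M)$ each and the same number of $O(\log(1/\eps))$ TLE updates, totalling $O(\log(\eps^2 nM)\log(M/\eps))$. Pre-processing is $\polylog(nM)$ to find the prime $q$, prime $p$, and generator $g$, plus per-level setup analyzed in \Theorem{tle-preprocess}, and post-processing only runs \Theorem{dodis} at level $j^*$, giving the stated $O(\eps^{-2}\log(1/\eps)\log\log(1/\eps)\log\log\log(1/\eps))$. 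The main obstacle is the concentration step: with only pairwise independent $h$, Chebyshev just barely clears the failure budget, so the constants $\eps'=\eps/8$, the ceiling defining $j^*$, and the $4$ in $k=\ceil{4/(\eps')^2}$ must be balanced carefully so that $R_{j^*}$ both stays below $k$ (allowing exact TLE recovery) and is within a $(1\pm\eps)$-relative factor of $\mu_{j^*}$.
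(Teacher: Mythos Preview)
Your proposal is correct and follows essentially the same route as the paper: the reduction of $L_1$ to $L_0$ on the expanded universe $[nM]$, subsampling via a pairwise-independent hash and choosing the level $j^*=\lceil\log((\eps')^2 R')\rceil$ so that $\mu_{j^*}\in(1/(4(\eps')^2),\,2/(\eps')^2]$, the Chebyshev bound yielding failure at most $1/16$, and the observation that $R_{j^*}\le 4/(\eps')^2=k$ so that \textsc{TwoLevelEstimator} recovers it exactly---all of this matches the paper's argument, as does the resource accounting. Two minor remarks: the paper case-splits on $L_1$ rather than on $R'$ (so in the small case it uses $L_1\le k$ directly and does not need the factor-of-two slack you absorb into $k$), and no median amplification of \Theorem{feig} is needed since its native $19/20$ success already suffices for the product $(19/20)\cdot(15/16)\cdot(3/4)>2/3$.
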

\begin{proof}
The hash function $h$ requires $O(\log(nM))$ space.  There are
$O(\log(\eps^2 nM))$ instantiations of \textsc{TwoLevelEstimator}
(Steps 2 and 4), each with $k = O(\eps^{-2})$, taking a total of
$O(\eps^{-2}\log(1/\eps)\log(\eps^2 nM))$ space by
\Theorem{two-level}. The hash functions $h_1,h_2$ and tables $T_1,T_2$
take $O(\log(1/\eps)\log(n) + \eps^{-2}\log^2(1/\eps)) =
O(\eps^{-2}\log(1/\eps)\log n)$ space, also by \Theorem{two-level}
(recall we assume $\eps \ge 1/\sqrt{n}$). Step 6 requires only
$O(\log(nM))$ space by \Theorem{feig}, since the algorithm is run
with error parameter $1/3$.

As for running time, in Step 3 we call the algorithm of
\Theorem{pavan} $O(\log(\eps^2 nM))$ times, each time with $a < q$ and
$r \le M$, thus taking a total of $O(\log(\eps^2 nM)\log(\min(q, M)))
= O(\log(\eps^2 nM)\log M)$ time. We must also feed the necessary
update to each $\TLE_j$, each time taking $O(\log(1/\eps))$ time by
\Theorem{two-level}.  Updating every $\TLE_j$ thus takes time
$O(\log(\eps^2 nM)\log(1/\eps))$.

In pre-processing we need to pick a prime $q$ in the desired range,
which can be accomplished by picking numbers at random and testing
primality; the expected time is $\polylog(nM)$. We also need to
prepare $h_1,h_2,T_1,T_2$ and all the \textsc{TwoLevelEstimator}
instantiations, which takes $O(\eps^{-2}\log(1/\eps)\log(\eps^2 nM))$
time by \Theorem{tle-preprocess}, in addition to the $\polylog(n)$
time required to find an appropriate prime $\ell$ as described in
\Remark{hash-stuff}. The pre-processing time for Step 6
is $O(1)$ (see Figure 1 of \cite{FKSV02}).

In post-processing we need to recover the estimate $R'$ from Step 6,
which takes $O(1)$ time, then recover an estimate from some
\textsc{TwoLevelEstimator} instantiation, so the time is as
claimed. In post-processing, to save time one should not run
Steps 4 and 5 of \textsc{TwoLevelEstimator} in \Figure{two-level}
except at the instantiation whose output is used in Step 7.

Now we analyze correctness. Let $\mathcal{Q}$ be the event that
$R'\in [L_1/2,L_1]$. We proceed by a case analysis.

For the first
case, suppose $L_1 \le
\ceil{1/(\eps')^2}$.  Then, $\TLE$
computes $L_1$ exactly with probability at least $3/4$ by
\Theorem{two-level}, and hence overall we output $L_1$ exactly with
probability at least $(19/20)\cdot (3/4) > 2/3$.

Now, suppose $L_1 > \ceil{1/(\eps')^2}$. In analyzing this case, it helps
to view \textsc{L1-Diff} as actually computing
$L_0(f') \eqdef |\{i : f_i' \neq 0\}|$, where we consider an
$nM$-dimensional vector $f'$ that is being updated as follows: when
receiving an update $(i,v)$ in the stream, we conceptually view this
update as being $|v|$ updates
$((i-1)M+1,\mathrm{sgn}(v)),\ldots, ((i-1)M+|v|,\mathrm{sgn}(v))$ to
the vector $f'$. Here, the vector $f'$ is initialized to
$\vec{0}$. Note that at the stream's end, $L_0(f') = ||f||_1$.

Let $f'^j$ denote the vector whose $i$th entry, $i\in[mM]$, is $f'_i$
if $h(i)\in [c_j,d_j]$ and $0$ otherwise.  That is, $f'^j$ receives
stream updates
only from items fed to $\TLE_j$. For $i\in[nM]$, let $X_{i,j}$ be a
random variable indicating $h(i)\in [c_j, d_j]$, and let $X_j =
\sum_{f'_i\neq 0} X_{i,j}$ so that $X_j = L_0(f'^j)$. Define $p_j
\eqdef (d_j-c_j+1)/q =
2^{\floor{\log q} - j}/q$ so that $\E[X_{i,j}] = p_j$. Thus, $\E[X_j]
= p_j\cdot L_0(f')$. 
Note that $1/2\le 2^{\floor{\log q}}/q\le 1$.  Conditioned on
$\mathcal{Q}$, we have the inequalities
$$\frac{L_0(f')}{2^{\ceil{\log((\eps')^2
    R')}}} \le \frac{L_0(f')}{(\eps')^2 R'} \le
\frac{2}{(\eps')^2}$$
and
$$\frac{L_0(f')}{2^{\ceil{\log((\eps')^2 R')}}} \ge
\frac{L_0(f')}{2(\eps')^2 R'} \ge \frac{1}{2(\eps')^2}$$
By
the choice of $j=\ceil{\log((\eps')^2 R')}$ in
Step 7 of \Figure{fullalg}, we thus have, assuming $\mathcal{Q}$
occurs,
$$\frac{16}{\eps^2} = \frac{1}{4(\eps')^2} \le  \E[X_j]
\le \frac{2}{(\eps')^2}$$
since $\E[X_j]
= p_j\cdot L_0(f') = (2^{\floor{\log q}}/q)\cdot (L_0(f')/2^j)$.

Let $\mathcal{Q}'$ be the event that $|X_j - \E[X_j]| \le \eps
\E[X_j]$.  Applying Chebyshev's inequality,
$$\Pr[\mathcal{Q}'|\mathcal{Q}] \ge 1 -
\frac{\Var[X_j]}{\eps^2\E^2[X_j]} \ge
1 - \frac{1}{\eps^2\E[X_j]} \ge \frac{15}{16} $$
The second inequality holds since $h$ is pairwise independent and $X_j$
is the sum of Bernoulli random variables, implying $\Var[X_j] = \sum_i
\Var[X_{i,j}] \le \sum_i \E[X_{i,j}] = \E[X_j]$. The last inequality
holds by choice of $\eps' = \eps/8$.

Let $\mathcal{Q}''$ be the event that $\TLE_j$ outputs $X_j$
correctly. Now, conditioned on
$\mathcal{Q}\wedge\mathcal{Q}'$, we have $X_j \le 2(1+\eps)/(\eps')^2
\le
4/(\eps')^2$ since $\eps \le 1$.  Thus by \Theorem{two-level},
$\Pr[\mathcal{Q}''|\mathcal{Q}\wedge \mathcal{Q}'] \ge 3/4$. Overall,
we compute
$L_1$ of the entire stream correctly with probability at least
\begin{eqnarray*}
\Pr[\mathcal{Q}\wedge \mathcal{Q}' \wedge \mathcal{Q}''] &=&
\Pr[\mathcal{Q}]\cdot \Pr[\mathcal{Q}'|\mathcal{Q}] \cdot
\Pr[\mathcal{Q}''|\mathcal{Q}\wedge \mathcal{Q}']\\
&\ge& (19/20)\cdot
(15/16)\cdot (3/4) > 2/3
\end{eqnarray*}
\end{proof}

Our streaming algorithm also gives a sketching procedure. This is
because, as long as Alice and Bob share randomness, they can generate
the same $h,h_1,h_2,p,g$ then separately apply the streaming
algorithm to their vectors $x,y$. The sketch is then just the state of
the
streaming algorithm's data structures.  Since each stream token causes
only linear updates to counters, a third party
can then take the counters from Bob's sketch and subtract them from
Alice's, then do post-processing to recover the estimation of the
$L_1$-difference. The running time for Alice and Bob to produce their
sketches is the streaming algorithm's pre-processing time, plus $n$
times the update time. The time for the third party to obtain an
approximation to $||x-y||_1$ is the time required to combine the
sketches, plus the post-processing time.  We thus
have the following theorem.

\begin{theorem}\TheoremName{sketching}
Sharing $\polylog(nM)$ randomness, two parties Alice and Bob,
holding vectors $x,y\in\{-M,\ldots,M\}^n$, respectively, can produce
$O(\eps^{-2}\log(1/\eps)\log(\eps^2 nM))$-bit sketches $s(x),s(y)$
such that a third party can recover $||x-y||_1$ to within $(1 \pm \eps)$
with probability at
least $2/3$ given only $s(x),s(y)$.  Each of Alice and Bob use
time $O(n\log(\eps^2 nM)\log(M/\eps))$ to produce their sketches. In
$O(\eps^{-2}(\log(\eps^2 nM)
+\log(1/\eps)\log\log(1/\eps)\log\log\log(1/\eps)))$ time,
the third party can recover $||x-y||_1$ to within a multiplicative
factor of $(1 \pm \eps)$. \afterproof
\end{theorem}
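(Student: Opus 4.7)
The plan is to derive the theorem from Theorem \ref{thm:main-alg} by exploiting the fact that the internal state of \textsc{L1-Diff} is a \emph{linear} function of the stream. Using the shared $\polylog(nM)$ random bits, Alice and Bob agree on the primes $p, q, \ell$ of \Figure{fullalg} and \Remark{hash-stuff}, a generator $g$ of $\mathbb{F}_p^*$, the seeds defining $h, h_1, h_2$, and the coin tosses of the Feigenbaum et al.\ subroutine of Theorem \ref{thm:feig}. The seed for the $t$-wise independent hash function $h_1$ with $t = O(\log(1/\eps))$ fits in $\polylog(nM)$ bits because $\eps \ge 1/\sqrt{n}$ forces $\log(1/\eps) = O(\log n)$; the discrete-log tables $T_1, T_2$ are deterministic functions of $p, g$ and are recomputed locally rather than transmitted. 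Alice then runs \textsc{L1-Diff} on her stream $(1, x_1), \ldots, (n, x_n)$ to obtain the counter state $s(x)$, and Bob analogously produces $s(y)$ from $(1, y_1), \ldots, (n, y_n)$.

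The key observation is that every counter is updated by an additive increment that is signed-linear in $v$. Concretely, in Step 4 of \Figure{fullalg} an update $(i, v)$ contributes $\mathrm{sgn}(v) \cdot v_j \cdot (h_2(i))^z$ to the counter $X_z^j$ of $\TLE_j$, where the range-count $v_j$ returned by the algorithm of Theorem \ref{thm:pavan} depends only on the interval $[(i-1)M + 1, (i-1)M + |v|]$ and not on $\mathrm{sgn}(v)$; the same holds for the counters of $\TLE$ and for the \cite{FKSV02} subroutine of Step 6, which is itself a linear sketch. Consequently, if $S$ denotes the componentwise difference $s(x) - s(y)$ (taken modulo $p$ for the \textsc{TwoLevelEstimator} counters and in the native field for the Feigenbaum et al.\ state), then $S$ equals precisely the state \textsc{L1-Diff} would reach after the concatenated $2n$-update stream $(1, x_1), \ldots, (n, x_n), (1, -y_1), \ldots, (n, -y_n)$, on whose induced vector $f$ we have $||f||_1 = ||x - y||_1$.

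The third party therefore forms $S$ and runs the post-processing of Step 7 of \Figure{fullalg} on it; by Theorem \ref{thm:main-alg} the output lies in $(1 \pm \eps)||x - y||_1$ with probability at least $2/3$. The sketch size matches the data-structure space of Theorem \ref{thm:main-alg} after excluding the shared randomness; Alice's and Bob's sketch-production times equal the pre-processing time plus $n$ update steps of that theorem, which simplify to the claimed $O(n \log(\eps^2 nM) \log(M/\eps))$ since $\eps \ge 1/\sqrt{n}$ absorbs the pre-processing into the update work; and the third party spends $O(\eps^{-2} \log(\eps^2 nM))$ constant-time counter subtractions to form $S$, followed by the post-processing time of Theorem \ref{thm:main-alg}, giving the bound in the statement. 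The only nontrivial step is the linearity verification for Step 4, and it is immediate once one observes that the output of Theorem \ref{thm:pavan} is a function of $|v|$ alone while $\mathrm{sgn}(v)$ distributes through the subtraction of sketches.
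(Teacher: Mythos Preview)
Your proposal is correct and follows essentially the same approach as the paper: the paper's proof is the short paragraph immediately preceding the theorem, which observes that the stream updates act linearly on the counters so that Alice and Bob can run \textsc{L1-Diff} separately on $x$ and $y$ using shared randomness, and the third party subtracts Bob's counters from Alice's before post-processing. Your write-up is in fact more detailed than the paper's---in particular your verification that $\Delta(i,-v) = -\Delta(i,v)$ for the Step~4 increments (via the observation that the Pavan--Tirthapura output depends only on $|v|$) and your explicit accounting of the third party's combination cost are points the paper leaves implicit.
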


Note Alice and Bob's running time is
always $O(n\log^2(nM))$ since $\eps\ge 1/\sqrt{n}$.

\begin{remark}
Though we assume Alice and Bob share randomness, to actually implement
our algorithm in practice this randomness must be communicated at some
point.
We note that while the sketch length guaranteed by \Theorem{sketching}
is $O(\eps^{-2}\log(1/\eps)\log(\eps^2 nM))$ bits, the required amount
of shared randomness is $\polylog(nM)$, which for large enough $\eps$
is larger than the sketch length.  This is easily fixed
though. Since the required randomness is only polynomially larger than
the space used by the sketching algorithm (which is asymptotically equal
to the sketch length), the two parties can use the Nisan-Zuckerman
pseudorandom
generator \cite{NisanZu96} to stretch a seed whose length is linear
  in the sketch length to a pseudorandom string of length
$\polylog(nM)$ which still provides the guarantees of
\Theorem{sketching}.  Alice and Bob then only need to communicate this
random seed.
\end{remark}

\section*{Acknowledgments}
We thank Avinatan Hassidim, Piotr Indyk, Yuval Ishai, and Swastik
Kopparty for useful comments and discussions, and Milan Ru\v{z}i\'{c}
for pointing out a tweak to
an early version of our work which improved our space by a factor of
$O(\log(1/\eps))$. We thank Venkat Chandar for pointing out the
reference \cite{Wyner74}, and Silvio Micali for pointing out
\cite{BM84}.

\bibliographystyle{plain}

\bibliography{./allpapers}

\end{document}